\newtheorem{theorem}{Theorem}[]
\theoremstyle{definition}
\newtheorem{definition}{Definition}	
\newtheorem{observation}{Observation}
\theoremstyle{remark}
\begin{document}
	
\title{3D Placement and Orientation of mmWave-based UAVs for Guaranteed LoS Coverage}

\author{Javad Sabzehali, Vijay K. Shah, Harpreet S. Dhillon, and Jeffrey H. Reed
\thanks{The authors are with Wireless@VT, The Bradley Department of ECE at Virginia Tech, Blacksburg, VA (emails: \{jsabzehali, vijays, hdhillon, reedjh\}@vt.edu). This work is partially supported by NSF Grant CNS-1923807 and Commonwealth Cyber Initiative (CCI), an investment in the advancement of cyber R\&D, innovation, and workforce development. For more information about CCI, visit {www.cyberinitiative.org}}
}

\maketitle

\begin{abstract} 
Unmanned aerial vehicles (UAVs), as aerial base stations, are a promising solution for providing wireless communications, thanks to their high flexibility and autonomy. Moreover, emerging services, such as extended reality, require high-capacity communications. To achieve this, millimeter wave (mmWave), and recently, terahertz bands have been considered for UAV communications. However, communication at these high frequencies requires a line-of-sight (LoS) to the terminals, which may be located in 3D space and may have extremely limited direct-line-of-view (LoV) due to blocking objects, like buildings and trees. In this paper, we investigate the problem of determining 3D placement and orientation of UAVs such that users have guaranteed LoS coverage by at least one UAV and the signal-to-noise ratio (SNR) between the UAV-user pairs are maximized. We formulate the problem as an integer linear programming(ILP) problem and prove its NP-hardness. Next, we propose a low-complexity geometry-based greedy algorithm to solve the problem efficiently. Our simulation results show that the proposed algorithm (almost) always guarantees LoS coverage to all users in all considered simulation settings.

\end{abstract}

\begin{IEEEkeywords}
Unmanned Aerial Vehicles, mmWave Communications, Line-of-Sight coverage, Directional Antenna
\end{IEEEkeywords}

\maketitle
\section{Introduction}
Unmanned Aerial Vehicles (UAVs), also known as drones, have numerous applications in telecommunications, rescue operations, aerial sensing, to name a few. Significant research efforts have gone into understanding the role of UAVs as aerial base stations (BSs) to complement the coverage and capacity of terrestrial wireless networks. mmWave and future teraHertz communications with orders of magnitude larger bandwidths than the conventional sub-6 GHz systems are expected to play a pivotal role in supporting emerging applications, such as extended reality, in UAV-assisted communications networks \cite{tripathi2021millimeterwave}. However, characterized by very high path loss in non-line-of-sight (NLoS) conditions, \textit{it is critical that mmWave (and teraHertz) based UAV communications always have a line-of-sight (LoS) coverage to wireless devices~} \cite{8984705}\cite{8338071}.

To provision high quality wireless LoS coverage to wireless devices (or users), UAVs should jointly adjust their coordinates and the orientations of their directional antennas. There exist some work ~\cite{7881122,8846947,7510820,7486987,8756767,7762053} that study the problem of communication coverage for UAVs, however, there are several limitations: (i) Existing work consider wireless devices on the ground (i.e., horizontal plane) whereas our work considers wireless devices in 3D space. (ii) Existing work model air-to-ground path loss model as a probabilistic LoS and NLoS  model in sub-6 GHz communications. However, as discussed in \cite{8338071,8984705}, to fully take advantage of mmWave-based communication, especially for UAVs, the receiver must be located such that it can secure a LoS communication with the transmitter. Our 3D LoS coverage model attempts to always guarantee LoS, a key requirement of mmWave communications. (iii) And finally, most of existing works build UAV wireless coverage model as a disk. In this paper, we propose to utilize 3D directional coverage model~\cite{wang2020placement} to realistically model mmWave-based UAV LoS and user direct line-of-view (LoV) coverage (See Fig. \ref{network_model}). Following this, we investigate the joint problem of 3D placement and orientation of UAVs such that (i) users with limited direct LoV region have guaranteed LoS coverage\footnote{Note that this work only considers LoS paths, however, they may be some strong NLoS paths and is left for future studies.} by at least one UAV, (ii) the SNR between the UAV and its associated users are maximized, and (iii) the number of UAVs required is minimized. To our understanding, this is the first work which studies the guaranteed LoS coverage problem for UAV mmWave communications, where users are considered in 3D space and has limited directed LoV region.

\begin{figure} 
    \centering
    \includegraphics[scale=0.27]{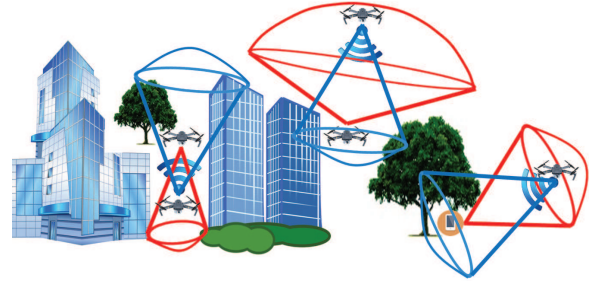}
    \caption{System Model. The red and blue cones respectively represent Line-of-view (LoV) region of users and Line-of-sight (LoS) region of UAVs.} \label{network_model}
    \vspace{-0.2in}
\end{figure}

The key contributions of this work are as follows:
\begin{itemize}
    \item We first present LoS and LoV coverage model for UAVs and users respectively, which realistically model mmWave based UAV systems.
    \item We formulate the problem of determining the 3D placement and antenna orientations of UAVs as an integer linear programming optimization problem. Then, we prove that it is NP-Hard.
    Next, we propose a low-complexity geometric approach for solving the problem efficiently.
    \item Our extensive experimental results depict that the proposed algorithm (almost) always guarantees LoS coverage to all users irrespective of its limited LoV region, compared to that of the baseline approach\footnote{The baseline algorithm is similar to the proposed algorithm in pseudocode ~\ref{alg:} except that the antenna orientation is fixed.}. Furthermore, we notice that compared to the baseline approach, the proposed algorithm incurs fewer number of UAVs and achieves much higher average SNR under scenarios where the user's LoV are higher and both the approaches can provide LoS coverage to the majority of users. 
  
\end{itemize}

The rest of the paper is organised as follows: In Section \ref{Secnetworkmodel}, we model the considered UAV-based mmWave wireless system and formulate the problem. In Section \ref{Secpropsol}, we discuss the proposed solution approach. In Section \ref{Secresults}, we present the experimental results, followed by conclusion in Section \ref{Secconclusion}.

\section{System Model and Problem Overview}
\label{Secnetworkmodel}

Let us consider a mmWave-based UAV communication system with $|C|$ UAVs and $|U|$ users (i.e., wireless devices, such as, aerial sensing drones, mobile terminals etc.) both deployed in 3D space (See Fig. \ref{network_model}). We assume that each user $\hat{u}_j \in U$ is located at a known 3D location $u_j$ in the region. Since, there may be buildings, trees or other objects on the side, top or bottom of any given user $\hat{u}_j$ (at location $u_j$), it is quite likely $\hat{u}_j$ may not always have a LoV from all directions. Let vector direction $\vv{d}_{u_j}=(\vv{x_j},\vv{y_j},\vv{z_j})$ with \textit{angle-of-view} (AoV), $\theta_j$ (where $0^{\circ} < \theta_j \leq 180^{\circ}$) denotes the user LoV for the considered user $\hat{u}_j$, and is known {\em a priori}.

Let $c_i$ denote the 3D location of UAV $\hat{c}_i \in C$ (in case of deployment in the 3D space). We assume that UAV $\hat{c}_i$ is equipped with a directional antenna.
Furthermore, we consider that the antenna can face any direction $\vv{d}_{c_i}$ that can be defined by two angles in 3D space (2D azimuth and elevation angles \cite{9154517}) or a 3D Cartesian vector  $(\vv{\hat{x}_i},\vv{\hat{y}_i},\vv{\hat{z}_i})$. Moreover, UAV's directional antenna half power beamwidth (HPBW) $\phi$ is a prespecified value between $(0^{\circ}, 180^{\circ})$. However, this limitation has no influence to the coverage orientation, because UAV can hover in the air and rotate itself to face any orientation $\vv{d}_{c_i}$. So in other words, considering $\vv{d}_{c_i}$ as a 3D Cartesian vector  $(\vv{\hat{x}_i},\vv{\hat{y}_i},\vv{\hat{z}_i})$ provides $360^{\circ}$ rotation capability to a certain UAV in our system model. 

Just like terrestrial cellular networks, we assume that neighboring UAVs transmit on orthogonal frequencies and hence do not interfere with each other. Also, our problem setup is general and is applicable to any multiple access scheme, such as time/frequency division multiple access or orthogonal frequency division multiple access.

\paragraph{Path Loss Model} 

As discussed earlier, for the communication to be successful, we need to ensure LoS between the UAVs and the users. Given an access link between user $\hat{u}_j$ and UAV $\hat{c}_i$ located respectively at 3D coordinates $u_j = (x_j, y_j, h_j)$ and $c_i = (x_i,y_i,h_i)$, the path loss of the channel (in dB) between user $\hat{u}_j$ and UAV $\hat{c}_i$ is modeled as follows:

\begin{equation} \label{pathloss_value_cal}
    {\varphi _{ij}^L = 20\log \left(\frac{{4\pi {f_c}{d_{ij}}}}{c}\right)
    }.
\end{equation}

where $f_c$ is the carrier frequency, $d_{ij}=[(x_j-x_i)^2+(y_j-y_i)^2+ (h_i - h_j)^2]^{\frac{1}{2}}$ is the 3D distance between user $\hat{u}_j$ and UAV $\hat{c}_i$, and $c$ is the speed of light.

Let $P_i$ be the transmission power and $G_i$ be the antenna gain of UAV $\hat{c}_i$, and $\delta^2$ as the noise power. The SNR between UAV $\hat{c}_i$ and user $\hat{u}_j$ hence can be expressed as
 \begin{equation}
 \label{SNR_eq}
     \gamma_{ij}= 10\log{P_{i}} + G_{i} - \varphi _{ij}^L - 10\log{\delta^2}.
 \end{equation}

Note that the SNR between a user $\hat{u}_j$ and a UAV $\hat{c}_i$ determines if the user $\hat{u}_j$ is covered by the corresponding UAV $\hat{c}_i$. In other words, a user $\hat{u}_j$ is within the coverage of a UAV $\hat{c}_i$ if $\gamma_{ij}$ meets the SNR threshold $\gamma_0$ (i.e., $\gamma_{ij} \geq \gamma_0$). 

\paragraph{UAV LoS (and user LoV) coverage model}
Both UAV LoS coverage model and user LoV coverage model will be based on the \textit{3D directional coverage} model (inspired by a recent work \cite{wang2020placement}), as discussed in the following.

\begin{figure}
    \centering
    \subfigure[\label{LoS_LoV_model}]{
\epsfig{figure=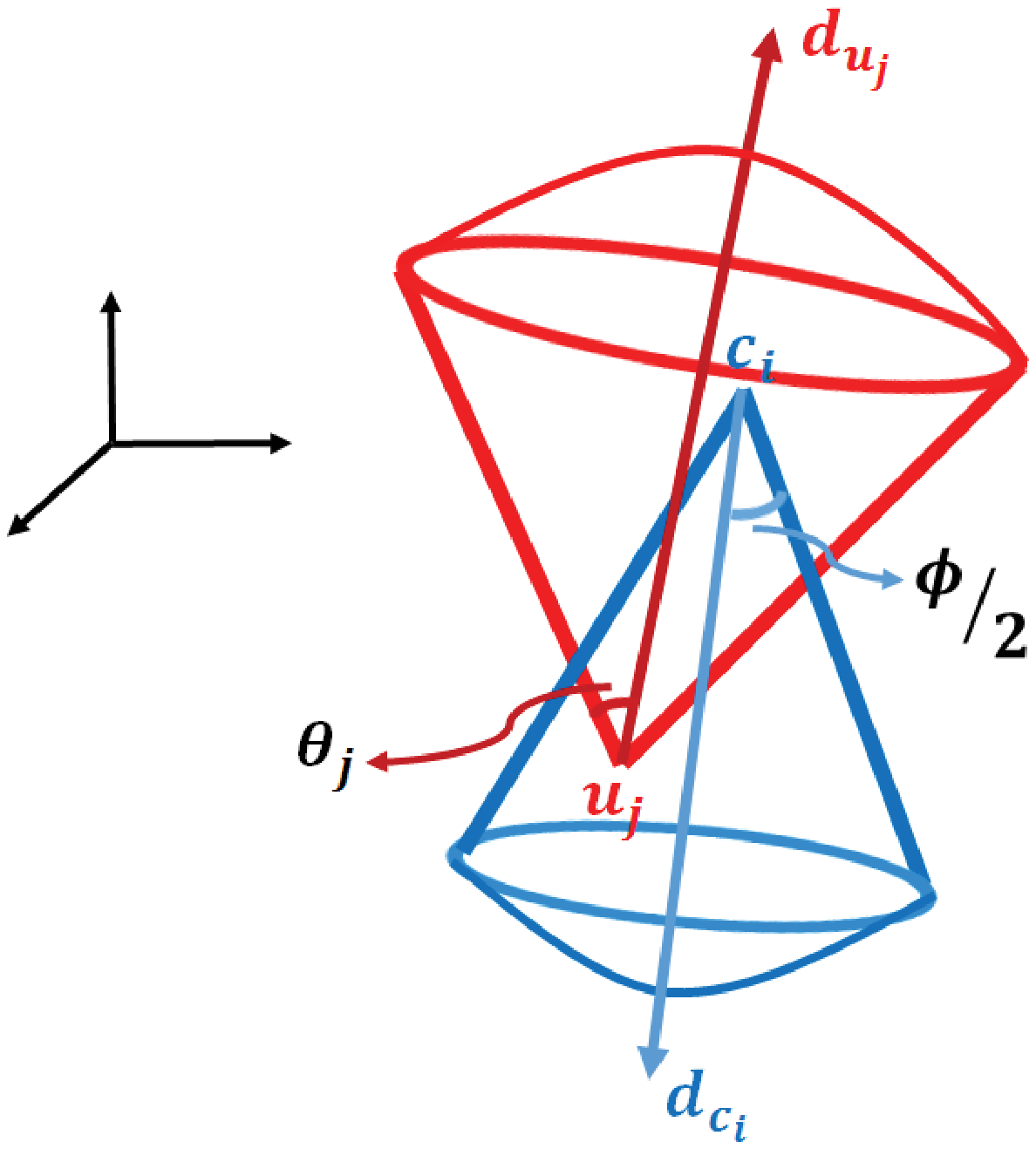,width=1.5in}}
\subfigure[\label{greedy_solution}]{
\epsfig{figure=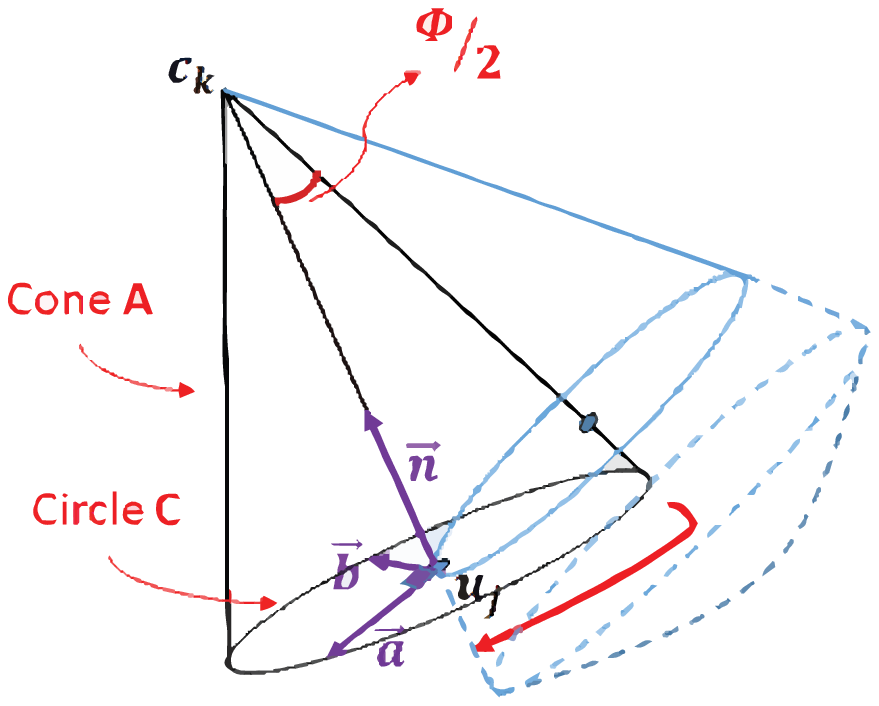,width=1.75in}}
    \caption{(a) Depiction of UAV LoS coverage model and user LoV coverage model and (b) Geometric approach for determining the optimal UAV orientation. }
    \label{fig:my_label}
    \vspace{-0.2in}
\end{figure}

\begin{definition}[3D Directional Coverage] \label{3D_directional_coverage}
A user $\hat{u}_j$ at 3D location $u_j$, with its vector direction $\vv{d}_{u_j}$, is said to be in the {\em directional coverage} of UAV $\hat{c}_i$ at 3D location $c_i$, with antenna orientation $\vv{d}_{c_i}$, and HPBW $\phi$, if $\hat{u}_j$ is covered by $\hat{c}_i$ (i.e., $\gamma_{ij} \geq \gamma_0$) and $\angle (\vv{d}_{u_j}, \vv{u_j c_i}) < \theta_j$.

\end{definition}

\textbf{UAV LoS Coverage Model.} According to Definition \ref{3D_directional_coverage}, UAV LoS coverage model can be established as a spherical base cone as shown in Fig. 2. Let $\hat{c}_i$ be the UAV with HPBW $\phi$, and $D_{max}$\footnote{For simplicity, we consider a sectorized antenna model (also called flat top antenna model) \cite{tripathi2021millimeterwave}, where the UAV antenna gain $G_i$ is a positive constant amount over every direction inside of the formed spherical based cone and is zero outside of it. This gives a constant maximum 3D distance between UAV $\hat{c}_i$ and user $\hat{u}_j$ in every direction to form the mentioned spherical based cone.} be the maximum 3D distance between UAV $\hat{c}_i$ and user $\hat{u}_j$ at which $\gamma_{ij} \geq \gamma_0$, then, the UAV LoS coverage model reduces to:
\begin{equation} \label{UAV_coverage_model}
\small
   \begin{cases}
   |\vv{u_j c_i}| \leq D_{max}\\
   \angle(\vv{c_i u_j}, \vv{d}_{c_i}) \leq \phi/2.
    \end{cases}
\end{equation}

Based on Equation \ref{UAV_coverage_model}, UAV $\hat{c}_i$ can efficiently cover a certain user $\hat{u}_j$ only when it positions itself in the spherical cone of user $\hat{u}_j$ where $|\vv{u_j c_i}| \leq D_{max}$ and $ \angle(\vv{c_i u_j}, \vv{d}_{c_i}) \leq \phi/2$.

\textbf{User LoV coverage model.} Similar to UAV LoS coverage model, user LoV coverage model can also be established as a spherical base cone as shown in Fig. \ref{3D_directional_coverage}. For simplicity, in this work, we only consider downlink connection. For downlink connection from UAV $\hat{c}_i$ to user $\hat{u}_j$, then we only need to ensure that user $\hat{u}_j$ is 3D directional covered by UAV $\hat{c}_i$. Hence, in this case, we only need to ensure $|\vv{u_j c_i}| = |\vv{c_i u_j}| \leq D_{max}$. Let user $\hat{u}_j$ be the user, with AoV $\theta_j$ and $D_{max}$ as the maximum distance at which $\gamma_{ji} \geq \gamma_0$, then user LoV coverage model is as follows.
\begin{equation} \label{user_coverage_model}
\small
   \begin{cases}
   |\vv{c_i u_j}| \leq D_{max}\\
   \angle(\vv{u_j c_i}, \vv{d}_{u_j}) \leq \theta_j.
    \end{cases}
\end{equation}

 Combining Equations \ref{UAV_coverage_model} and \ref{user_coverage_model}, the \textbf{guaranteed LoS coverage function} can be defined as follows:
\begin{equation} \label{eq:LoS_coverage_model}
\small
    F_{los} (c_i, u_j, \vv{d}_{c_i}, \vv{d}_{u_j})= 
      \begin{cases}
   1,  |\vv{c_i u_j}| \leq D_{max}, \\
   \angle(\vv{u_j c_i}, \vv{d}_{u_j}) \leq \theta_j, \text{and} \\ %|\vv{u_j c_i}| \leq D, 
   \angle(\vv{c_i u_j}, \vv{d}_{c_i}) \leq \phi/2 \\
   0, \text{otherwise}.
    \end{cases}
\end{equation}

\paragraph{Problem Formulation}
\label{Secprobformulation}

Let $f_{\hat{c}_i}$ denote whether UAV $\hat{c}_i$ is located at 3D location $c_i = (x_i, y_i, h_i)$ or not, and $\vv{d}_{c_i}$ denote the directional antenna orientation of UAV $\hat{c}_i$ (at 3D location $c_i$). Let $\beta_{ij}$ denote that user $\hat{u}_j$ (at known 3D location $u_j$) is associated with UAV $\hat{c}_i$. Then, the key objective function is to determine a set of strategies, i.e., $\cup_{c_i \in C} \langle c_i, \vv{d}_{c_i}\rangle$, that maximizes the SNR between each UAV and its associated users while minimizing the number of deployed UAVs.

\begin{align}
\small
& \textbf{P0}:\mathop {\max_{<c_i, \vv{d}_{c_i}>}} \hspace{1mm} \frac{\sum_{\hat{c}_i \in C} \sum_{\hat{u}_j \in U} \gamma_{ij} \beta_{ij} }{\sum_{\hat{c}_i \in C} f_{\hat{c}_i}} \label{obj_function}\\
& s.t. \sum_{\hat{c}_i \in C} \beta_{ij}.F_{los} (c_i,u_j,\vv{d}_{c_i},\vv{d}_{u_j}) = 1 \label{user_association}
\end{align}

Constraint \ref{user_association} ensures that a certain wireless user is associated with one unique UAV among all the UAVs which provides LoS coverage to the user. 

\begin{theorem}
The problem \textbf{P0} is NP-Hard. 
\end{theorem}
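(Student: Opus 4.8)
The plan is to prove NP-hardness by a polynomial-time reduction from the \emph{geometric disk cover} problem: given a finite point set $P$ in the plane, a finite set $Q$ of candidate disk centres, a radius $R>0$, and a positive integer $k$, decide whether some $k$ disks of radius $R$ centred at points of $Q$ together cover $P$. This is a classical NP-hard problem, so it suffices to encode an arbitrary such instance, in polynomial time, as an instance of \textbf{P0} together with a threshold on its objective.

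The encoding restricts \textbf{P0} to a benign special case in which the angular constraints collapse to a single disk. I place each user $\hat{u}_j$ on the ground plane at a point of $P$ (so $h_j=0$) and set every angle-of-view to $\theta_j=180^\circ$, which makes the user-LoV constraint $\angle(\vv{u_j c_i},\vv{d}_{u_j})\le\theta_j$ vacuous. I take the candidate UAV locations to be exactly $\{(q,H):q\in Q\}$ at a common altitude $H$, use a common transmit power, and set $D_{max}=\sqrt{H^2+R^2}$ for an $H$ large enough that $R\le H\tan(\phi/2)$; then $H<D_{max}\le H\sec(\phi/2)$. With these choices the range condition $|\vv{c_i u_j}|\le D_{max}$ is equivalent to the purely horizontal condition $\rho_{ij}\le R$, independent of the antenna orientation, while a UAV at $(q,H)$ pointing straight down has every ground point at horizontal distance at most $H\tan(\phi/2)\ge R$ inside its half-power cone. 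Hence the set of users a UAV based at $(q,H)$ can cover is exactly the radius-$R$ disk centred at $q$; tilting the beam away from vertical can only shrink this set. Consequently a family of UAV placements guarantees LoS coverage to all users in the constructed instance if and only if the corresponding radius-$R$ disks cover $P$, and the minimum number of deployed UAVs equals the minimum disk-cover size.

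It remains to reconcile the ratio objective of \textbf{P0} with the goal of minimizing the number of deployed UAVs. Because constraint \eqref{user_association} forces every user to be associated with exactly one UAV, the numerator of \eqref{obj_function} is a sum of exactly $|U|$ SNR terms, and in the construction each term lies in a fixed interval $[\Gamma_{\min},\Gamma_{\max}]$ of width $\epsilon:=\Gamma_{\max}-\Gamma_{\min}=20\log(D_{max}/H)=10\log(1+R^2/H^2)$; enlarging $H$ (and $P$, to keep every SNR positive) makes $\epsilon<\Gamma_{\min}/|Q|$ while keeping all magnitudes polynomial in the input size. Then any feasible solution deploying $m$ UAVs has objective value in $[\,|U|\Gamma_{\min}/m,\ |U|\Gamma_{\max}/m\,]$, and these intervals for $m$ and $m+1$ are disjoint for every $1\le m\le|Q|$. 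Hence $\mathrm{OPT}(\textbf{P0})\ge |U|\,\Gamma_{\min}/k$ holds if and only if $P$ admits a cover by at most $k$ radius-$R$ disks centred in $Q$, so the map sending a disk-cover instance to this \textbf{P0} instance with threshold $\tau=|U|\,\Gamma_{\min}/k$ is the desired polynomial-time reduction, and \textbf{P0} is NP-hard.

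I expect the main obstacle to be the geometric claim that a UAV's set of coverable users is \emph{exactly} its straight-down disk: this requires checking that re-orienting the antenna never enlarges the ground footprint, which uses both the orientation-independence of the range constraint and the inequality $R\le H\tan(\phi/2)$, and it is the step that makes the special case behave like clean disk cover. A secondary obstacle is the SNR bookkeeping above, needed so that the optimum of the ratio objective pins down the exact minimum disk-cover number rather than merely bracketing it; and one must confirm that the special case used --- ground users with $\theta_j=180^\circ$, a common $D_{max}$, and a finite set of candidate UAV locations and orientations including the vertically-downward one --- is genuinely admissible in the model as stated.
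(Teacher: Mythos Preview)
Your strategy is the same as the paper's: collapse the angular constraints so that coverage becomes a pure disk-covering condition, and invoke NP-hardness of a geometric covering problem. The paper does this in one stroke by setting $\phi=360^\circ$ and $\theta_j=180^\circ$ so that the cone constraints vanish outright and the instance ``is a case of geometric set cover''; it does not separately analyse the ratio objective. Your version is more careful on two fronts. First, you keep $\phi$ in its stated range $(0^\circ,180^\circ)$ and instead choose the altitude $H$ so that $R\le H\tan(\phi/2)$, making the antenna-cone constraint slack rather than absent; this is arguably cleaner, since the paper's choice $\phi=360^\circ$ actually falls outside the model it declares. Second, you do the bookkeeping showing that the ratio objective in \eqref{obj_function} can be thresholded to recover the exact minimum number of disks, which the paper simply glosses over.

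The one point you flag yourself is the real residual issue: your source problem fixes a finite candidate set $Q$ of disk centres, but \textbf{P0} as stated lets each $c_i$ range over all of $\mathbb{R}^3$. If UAV positions are unconstrained, the optimal \textbf{P0} solution could use fewer UAVs than the optimal $Q$-restricted disk cover, breaking the equivalence. The fix is immediate: reduce instead from \emph{continuous} unit-disk cover (minimum number of radius-$R$ disks, centres anywhere, covering a given finite point set), which is also NP-hard; then your construction with users on the ground, $\theta_j=180^\circ$, and $D_{\max}=\sqrt{H^2+R^2}$ goes through unchanged, with UAVs free to sit anywhere at altitude $H$ and the straight-down orientation still optimal. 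Your remark that ``tilting the beam away from vertical can only shrink this set'' is slightly stronger than needed (small tilts may leave the covered set unchanged), but the only fact you use is that straight-down already attains the full range disk, which is exactly what $R\le H\tan(\phi/2)$ guarantees.
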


\begin{proof}
Assume the antenna is an omnidirectional antenna, i.e., $\phi = 360^{\circ}$ and the user has direct line-of-view (LoV) from all directions, i.e., $\theta = 180^{\circ}$. Moreover, also consider that users are placed on the ground (i.e., horizontal plane). Under these considerations, the problem \textbf{P0} is transformed into determining the minimum number of UAVs and their optimal 3D coordinates that ensure $100\%$ ground user coverage. This is a case of geometric set cover problem, which is a classic NP-Hard problem \cite{brimkov_approximation_2012}. Therefore, since the specific instance of P0 is NP-Hard, \textbf{P0} itself is NP hard.
\end{proof}

\paragraph{Problem relaxation under practical considerations} The set of strategies, $\Pi = \cup_{c_i \in C} \langle c_i, \vv{d}_{c_i}\rangle$ is an exponentially large solution space. This is because the 3D coordinate space is infinite and also, the orientation of directional antenna $\vv{d}_{c_i}$ is an infinite space. However, under practical scenario, we consider the following relaxations to the problem: 
\begin{itemize}
    \item The 3D coordinate space is discretized in $|\mathbb{K}|$  equal-sized cubic grids (or discrete locations) as usually considered in the literature~\cite{bertizzolo2020live}, and the 3D coordinate location for a certain UAV must belong to a certain grid. Let $\hat{c}_k$ denote a certain UAV (in set $C$) located at grid $k \in \mathbb{K}$.
    \item By considering the information known about the users' location and its LoV, we can use an efficient geometrical approach (discussed in the next subsection) to obtain a finite UAV antenna orientation space.
\end{itemize}

\paragraph{Geometrical Approach}
\label{mathapp}
For each user $\hat{u}_j \in U_{c_k}$ (where $U_{c_k}$ is the set of all users that is LoS covered by a UAV placed at a grid location $k \in \mathbb{K}$), we do the following: First, we generate an arbitrary orientation for deploying a UAV $\hat{c}_k$ at candidate location $k$ so that the selected user $\hat{u}_j$ lie on the surface of the UAV $\hat{c}_k$ LoS coverage region (See Fig. \ref{greedy_solution}). Then, by keeping the user $\hat{u}_j$ on the surface of the UAV $\hat{c}_k$ LoS coverage region, we will change the orientation of UAV $\hat{c}_k$. It implies that the line $c_ku_j$ will remain on the surface of all UAV $\hat{c}_k$ LoS coverage regions. Since the antenna HPBW $\phi$ is assumed to be constant in all cases, the orientations of all UAV placements will consist of the surface of a cone $A$ with radius $r$ and height $|\vv{c_k u_j}|$ (See Fig. \ref{greedy_solution}). To figure out these orientations, we need to figure out the equation of the circumference of the base of the cone $A$ (i.e. circle $c$ with center $u_j$ and radius $r$). Given the location of grid $k$, if we know all points lying on the circle $c$, then the line passing $k$ and a point on circle $c$ can be uniquely determined.

By knowing the normal vector of a plane and a point on that plane, we can also uniquely figure out the equation of that plane. Hence, the equation of the plane that circle $c$ lies on can be determined by knowing the location of $u_j=(\hat{x},\hat{y},\hat{z})$ and $\vv{u_j c_k}$, which is the normal vector of the plane (i.e. $\vv{n}=(u,v,w)$). Therefore, the equation of the plane would be $u.x+v.y+w.z=t$. By plugging $u_j$ in this equation, i.e. $u.\hat{x}+v.\hat{y}+w.\hat{z}=t$, $t$ will be found uniquely. Now we know the equation of the plane that circle $c$ lies on, we can define a random vector $\vv{a}$ on the plane such that passes through $u_j$ and its length be equal to one. Let $\vv{b}$ be the cross product of $\vv{n}$ and $\vv{a}$, i.e. $\vv{b}=\vv{a} \times \vv{n}$, that will be another vector on the plane perpendicular to both $\vv{n}$ and $\vv{a}$. Then $c\in \{r.(\vv{a}.sin(\omega)+\vv{b}.cos(\omega))+u_j\}$, where $0^\circ \leq \omega < 360^\circ$.

\begin{observation} \label{feasible_LoS_theorem}
Given $c_k$ lies inside the LoV region of user $\hat{u}_j$ at 3D location $u_j$, $|\vv{c_ku_j}|\leq D_{max}$. It means that placing a UAV $\hat{c}_k$ at grid location $k$ in some certain orientation like $\vv{d}_{c_k} = \vv{c_ku_j}$, user $\hat{u}_j$ will be LoS covered by $\hat{c}_k$ (See Eq. \ref{eq:LoS_coverage_model}). Hence, if a UAV $\hat{c}_k$ placed at a certain location $k \in \mathbb{K}$ lies inside the LoV region (spherical cone base shape) of a user $\hat{u}_j$, then there always exist a feasible orientation for a UAV $\hat{c}_k$ that provides LoS coverage to the user $\hat{u}_j$.
\end{observation}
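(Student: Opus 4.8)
The plan is to unwind the definition of the guaranteed LoS coverage function in Equation~\ref{eq:LoS_coverage_model} and check that each of its three clauses can be met simultaneously under the stated hypothesis, namely that the candidate grid location $c_k$ lies inside the LoV region (the spherical-base cone of Equation~\ref{user_coverage_model}) of user $\hat{u}_j$. The argument is constructive: I will produce an explicit admissible orientation $\vv{d}_{c_k}$ and verify it works.

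First I would note that membership of $c_k$ in the user LoV region of $\hat{u}_j$, by Equation~\ref{user_coverage_model}, immediately supplies the first two requirements of $F_{los}$: $|\vv{c_k u_j}| = |\vv{u_j c_k}| \leq D_{max}$ and $\angle(\vv{u_j c_k}, \vv{d}_{u_j}) \leq \theta_j$. Crucially, neither of these conditions involves the UAV's antenna orientation; they are properties of the relative geometry of $c_k$ and $u_j$ together with the user's fixed view direction $\vv{d}_{u_j}$ and AoV $\theta_j$, all of which are given.

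Next I would exhibit an orientation satisfying the remaining clause $\angle(\vv{c_k u_j}, \vv{d}_{c_k}) \leq \phi/2$. Taking $\vv{d}_{c_k} = \vv{c_k u_j}$ makes this angle exactly $0^{\circ}$, hence $\leq \phi/2$ for every admissible HPBW $\phi \in (0^{\circ}, 180^{\circ})$. Therefore all three clauses of Equation~\ref{eq:LoS_coverage_model} hold and $F_{los}(c_k, u_j, \vv{d}_{c_k}, \vv{d}_{u_j}) = 1$, i.e., $\hat{u}_j$ is LoS covered by $\hat{c}_k$ under this orientation, which establishes the existence claim. I would also remark that the geometrical construction described just above (the cone $A$) yields an entire one-parameter family of feasible orientations, of which $\vv{d}_{c_k} = \vv{c_k u_j}$ is merely the simplest representative.

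I do not expect a genuine obstacle: the statement is essentially a consistency check between the LoV coverage model and the LoS coverage function. The only points that warrant care are bookkeeping ones, namely keeping the direction convention consistent (a flip between $\vv{c_k u_j}$ and $\vv{u_j c_k}$ would invert the chosen orientation) and using the fact that the same $D_{max}$ governs both the LoV constraint in Equation~\ref{user_coverage_model} and the LoS distance constraint in Equation~\ref{eq:LoS_coverage_model}, which the sectorized (flat-top) antenna assumption already guarantees.
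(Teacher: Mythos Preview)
Your proposal is correct and mirrors the paper's own justification, which is embedded directly in the Observation: the paper also constructs the explicit orientation $\vv{d}_{c_k} = \vv{c_k u_j}$ and appeals to Equation~\ref{eq:LoS_coverage_model}, while your write-up simply makes explicit the verification of each clause (distance, user LoV angle, UAV beam angle). There is no substantive difference in approach.
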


\section{Proposed Algorithm}
\label{Secpropsol}

In this section, we propose a low-complexity greedy algorithm based on the aforementioned geometric approach to solve the problem efficiently. 

As depicted in pseudocode \ref{alg:}, the algorithm computes the set of all users, $U_{c_k}$, that can be LoS covered by a UAV placed at a discrete candidate location $k \in \mathbb{K}$ with at least one feasible antenna orientation $(\vv{\hat{x}_k},\vv{\hat{y}_k},\vv{\hat{z}_k})$. Let $\mathbb{K}_{LoV} \subseteq \mathbb{K}$ denote the set of such candidate grid locations (See Observation \ref{feasible_LoS_theorem} and lines 3-7). Notice that, UAV $\hat{c}_k$ may not necessarily provide LoS coverage to all the users in $U_{c_k}$ with any one unique orientation (among all possible orientations). Thus, the algorithm further computes the subset of users $U_{<c_k,d^{u_i}_{c_k}>} \in U_{c_k}$ LoS covered by orientation $d^{u_i}_{c_k}$ that cover the largest possible subset of $U_{c_k}$ while necessarily a user $\hat{u}_i$ is being covered (See lines 11 - 12). To compute $U_{<c_k,d^{u_i}_{c_k}>}$ and $d^{u_i}_{c_k}$, the algorithm uses the geometric approach described previously in subsection II-e.

By repeating this procedure for every single user $\hat{u}_i \in U_{c_k}$, the algorithm selects the case in which the most number of users can be covered by grid location $k$, i.e., subset $U^{max}_{c_k}$, and corresponding orientation $\vv{d}^{max}_{c_k}$. We also compute the average of SNR between the UAV $\hat{c}_k$ and each user $\hat{u}_j \in U^{max}_{c_k}$, and term it $\gamma^{avg}_{c_k}$ for a UAV at grid location $k$ and selected orientation $\vv{d}^{max}_{c_k}$ (See lines 9-14 of pseudocode \ref{alg:}). 
After repeating this procedure for all grid locations, among the cases that cover maximum number of users, we will pick the case which has the largest $\gamma^{avg}_{c_k}$ (See lines 15-18 of pseudocode \ref{alg:}). We delete associated users to the selected grid location $c_{sel}$, which is $U^{max}_{c_{sel}}$, from all $U_k$s (See lines 19-21 of pseudocode \ref{alg:}). It means that each user's coverage ensures only once in the proposed algorithm. These steps will be repeated until no users remains such that could be covered by grid candidate locations in the space or the total number of UAVs $C$ has been deployed. It is worth mentioning that $|\cup_{k} {U_{c_k}}|$ will be a decreasing quantity after each iteration, and thus will reach zero in countable number of iterations. Therefore, the algorithm 1 is terminated in a polynomial-time running complexity as also analyzed in Theorem \ref{theorem:time_complexity}. In our work, we consider $C$ to be sufficiently large (i.e., $|C| > \mathbb{K}$), however, a certain network provider may consider lesser value of $C$ for enormous cost reasons. Nevertheless, the proposed algorithm will always provide a solution with minimum number of UAVs ($\leq |C|$) with LoS coverage to almost all users and also, high SNR between UAV and users.

\begin{algorithm} [h!]
\scriptsize
		\textbf{Input:} $U$ users, $C$ UAVs, and $\mathbb{K}$ grid locations.\\
	\textbf{Output:} $\Pi$ set of strategies
	\begin{algorithmic} [1]
	\State Initialize strategy set $\Pi = \phi$, $U_{so-far} = \phi$
	\State Set of LoV candidate grids, $\mathbb{K}_{LoV} = \phi$
    \For{a user, $\hat{u}_i \in U$}
        \For{a UAV at grid location, $k  \in \mathbb{K}$}
        \If{$k$ lies inside LoV region of user $\hat{u}_i$}
            \State $U_{c_k} = U_{c_k} \cup \{\hat{u}_i\}$
            \State $\mathbb{K}_{LoV} = \mathbb{K}_{LoV} \cup c_k$
        \EndIf
        \EndFor
    \EndFor
    \While{$|\cup_{k} {U_{c_k}}|\neq0$ and $|\cup_{k} \hat{c}_k| \leq |C|$}
    \For{a UAV at grid location, $k  \in \mathbb{K}_{LoV}$}
    \For{a user, $\hat{u}_i \in U_{c_k}$}
    \State $d^{u_i}_{c_k} =  \underset{x \in \mathcal{D}^{u_i}_{c_k}}{max} x$, where $\mathcal{D}^{u_i}_{c_k}$ is the set of all possible antenna orientations with user $\hat{u}_i$ covered by a UAV $\hat{c}_k$
    \State Compute list of users $U_{<c_k, d^{u_i}_{c_k}>}$ covered by UAV $\hat{c}_k$ with antenna orientation $d^{u_i}_{c_k}$
    \EndFor
    \State $<U^{max}_{c_k}, d^{max}_{c_k}>= \{<U^{u_i}_{c_k}, d^{u_i}_{c_k}> \mid \underset{\hat{u}_i \in U_{c_k}}{max} U_{<c_k, d^{u_i}_{c_k}>}\}$
    \State Average SNR, $\gamma^{avg}_{c_k} = \frac{\underset{\hat{u}_j \in U^{max}_{c_k}} {\sum} \gamma_{c_k u_j}}{|U^{max}_{c_k}|}$
    \State $U_{so-far} = U_{so-far} \cup <c_k, d^{max}_{c_k}, U^{max}_{c_k}, \gamma^{avg}_{c_k} >$
    \EndFor
    \State Sort $U_{so-far}$ based on non-increasing order of $U^{max}_{c_k}$ followed by non-increasing order of $\gamma^{avg}_{c_k}$
    \State Select the first item in sorted $U_{so-far}$ as the chosen strategy, i.e., $<c_{sel}, d_{c_{sel}}> = <U_{so-far}[0][0], U_{so-far}[0][1]>$
    \State $\Pi = \Pi \cup <c_{sel}, d_{c_{sel}}>$
     \For{a UAV at grid location, $k \in \mathbb{K}_{LoV}$}
             \For{user, $\hat{u}_i \in U_{so-far}[0][2]$}
                 \State $U_{c_k} = U_{c_k} \setminus \{\hat{u}_i\}$
             \EndFor
         \EndFor
     \EndWhile

    \State return $\Pi$
	\end{algorithmic}  
	\caption{Proposed Algorithm}
	\label{alg:} 
\end{algorithm}

\begin{theorem}
The  time complexity of the proposed algorithm is $O(|U|^2.|\mathbb{K}|+ |U|.|\mathbb{K}|.\log(|\mathbb{K}|)$. \label{theorem:time_complexity}
\end{theorem}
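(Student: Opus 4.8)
The plan is to bound the running time by splitting pseudocode~\ref{alg:} into two phases: the pre-processing phase (lines 1--7), in which the candidate grid set $\mathbb{K}_{LoV}$ and the covered-user sets $\{U_{c_k}\}$ are constructed, and the main greedy phase (the \textbf{while} loop), in which exactly one grid location and one orientation are committed per iteration. I would analyze the cost of each phase separately and add the two contributions.

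For the pre-processing phase, the nested \textbf{for} loops of lines 3--7 iterate over every pair $(\hat{u}_i,k)\in U\times\mathbb{K}$, and each pass performs only the containment test ``$k$ lies inside the LoV region of $\hat{u}_i$'' together with a constant-time set insertion. Since the LoV region is a fixed spherical-base cone (Eq.~\ref{user_coverage_model}), the test reduces to one distance comparison and one angle comparison on fixed-dimension vectors, i.e.\ $O(1)$ work per pair; hence this phase costs $O(|U||\mathbb{K}|)$. Next I would bound the number of \textbf{while}-loop iterations. Following the termination argument in the text, $|\cup_k U_{c_k}|$ strictly decreases after every iteration: the chosen set $U^{max}_{c_{sel}}$ is non-empty (by Observation~\ref{feasible_LoS_theorem} it contains at least the forced user $\hat{u}_i$, since the orientation $\vv{d}_{c_k}=\vv{c_ku_i}$ covers $\hat{u}_i$), and all its members are deleted from every $U_{c_k}$ in lines 19--21. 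As $|\cup_k U_{c_k}|\le|U|$ initially and drops by at least one each time, the loop runs at most $|U|$ times (the UAV-budget condition $|\cup_k\hat{c}_k|\le|C|$ can only make this smaller).

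It then remains to charge the cost of a single \textbf{while} iteration. The dominant pieces are: (i) the nested loops over $k\in\mathbb{K}_{LoV}$ (at most $|\mathbb{K}|$) and $\hat{u}_i\in U_{c_k}$ (at most $|U|$), where for each pair the orientation $d^{u_i}_{c_k}$ and its covered list are produced via the geometric construction of the preceding subsection, whose output (a point on circle $c$, the line through $k$, and the resulting HPBW cone) is computed with a constant number of vector/cross-product operations; (ii) the per-grid selection of $U^{max}_{c_k}$ and the average-SNR computation of lines 15--16 (really lines 15--18 in the text's grouping), each linear in $|U_{c_k}|\le|U|$; (iii) the sort of $U_{so-far}$, which holds at most one entry per candidate grid, i.e.\ $O(|\mathbb{K}|)$ entries, hence $O(|\mathbb{K}|\log|\mathbb{K}|)$; and (iv) the deletion loop of lines 19--21, costing $O(|U||\mathbb{K}|)$. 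Summing, one \textbf{while} iteration costs $O(|U||\mathbb{K}|+|\mathbb{K}|\log|\mathbb{K}|)$; multiplying by the $O(|U|)$ iteration bound and adding the pre-processing term $O(|U||\mathbb{K}|)$ yields $O(|U|^2|\mathbb{K}|+|U||\mathbb{K}|\log|\mathbb{K}|)$, as claimed.

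The main obstacle I anticipate is the careful amortized charging of the inner loops in step~(i): one has to argue that, exploiting the closed-form description of the feasible-orientation cone, both the maximizing orientation $d^{u_i}_{c_k}$ and (the cardinality of) its covered set $U_{\langle c_k,d^{u_i}_{c_k}\rangle}$ are extracted in time whose total over all $\hat{u}_i\in U_{c_k}$ and all $k\in\mathbb{K}_{LoV}$ stays within $O(|U||\mathbb{K}|)$ per iteration, rather than degrading to $O(|U|^2|\mathbb{K}|)$ if every user were naively re-tested against every candidate orientation from scratch --- this is the delicate point where the geometry of Section~II-e must be invoked. A secondary, routine point is to confirm that the sort of $U_{so-far}$ is the only operation that is super-linear in $|\mathbb{K}|$, so the $\log|\mathbb{K}|$ factor attaches to the $|U||\mathbb{K}|$ term and nowhere else.
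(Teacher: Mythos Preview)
Your proposal is correct and follows essentially the same line-by-line accounting as the paper: an $O(|U|\,|\mathbb{K}|)$ preprocessing phase, at most $|U|$ iterations of the \textbf{while} loop, and an $O(|U|\,|\mathbb{K}|+|\mathbb{K}|\log|\mathbb{K}|)$ cost per iteration, combined exactly as you do. You are in fact more careful than the paper's own proof, which simply asserts the $O(|U|)$ bound on lines 10--12 without addressing the amortization concern you flag in your final paragraph.
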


\begin{proof}
The time complexity for lines $3-7$ of the proposed algorithm is $O(|U|.|\mathbb{K}|)$. Lines 10-12 have time complexity of $O(|U|)$ followed by the same time complexity for lines $13-14$, while line 15 takes $O(1)$. Therefore, the running time of lines $9-15$ will be $O(|U|.|\mathbb{K}|)$. The cost of line $16$ is $O(|\mathbb{K}|.\log(|\mathbb{K}|)$, followed by $O(1)$ for lines $17-18$. The running time for lines $19-21$ is $O(|U|.|\mathbb{K}|)$. Hence, the runtime complexity of lines $8-21$ is $O(|U|.|\mathbb{K}|.(|U|+log(|\mathbb{K}|)) = O(|U|^2.|\mathbb{K}|+ |U|.|\mathbb{K}|.\log(|\mathbb{K}|)$, which will also be the worst-case time complexity of the proposed algorithm.
\end{proof}

\section{Simulation Results}
\label{Secresults}

In this section, we discuss the comparative analysis of proposed algorithm against a simple baseline approach (discussed below) in terms of key performance metrics, namely, percentage of users LoS covered, number of UAVs needed and average SNR achieved. For the \textbf{baseline approach}, we consider that the UAV antenna is always facing downward (i.e., azimuth angle to $0^\circ$ and elevation angle to $-90^{0}$ or 3D cartesian vector is $(0, 0, -1)$), which is usually considered in the literature~\cite{8454668}. Notice that this baseline approach does not take into account the limited LoV region of each user, and thus, has limitations in providing guaranteed LoS coverage to all users as discussed later in this section.

\begin{figure*}[h!]
    \centering
    \subfigure[\label{Percentages}]{\includegraphics[scale=0.15]{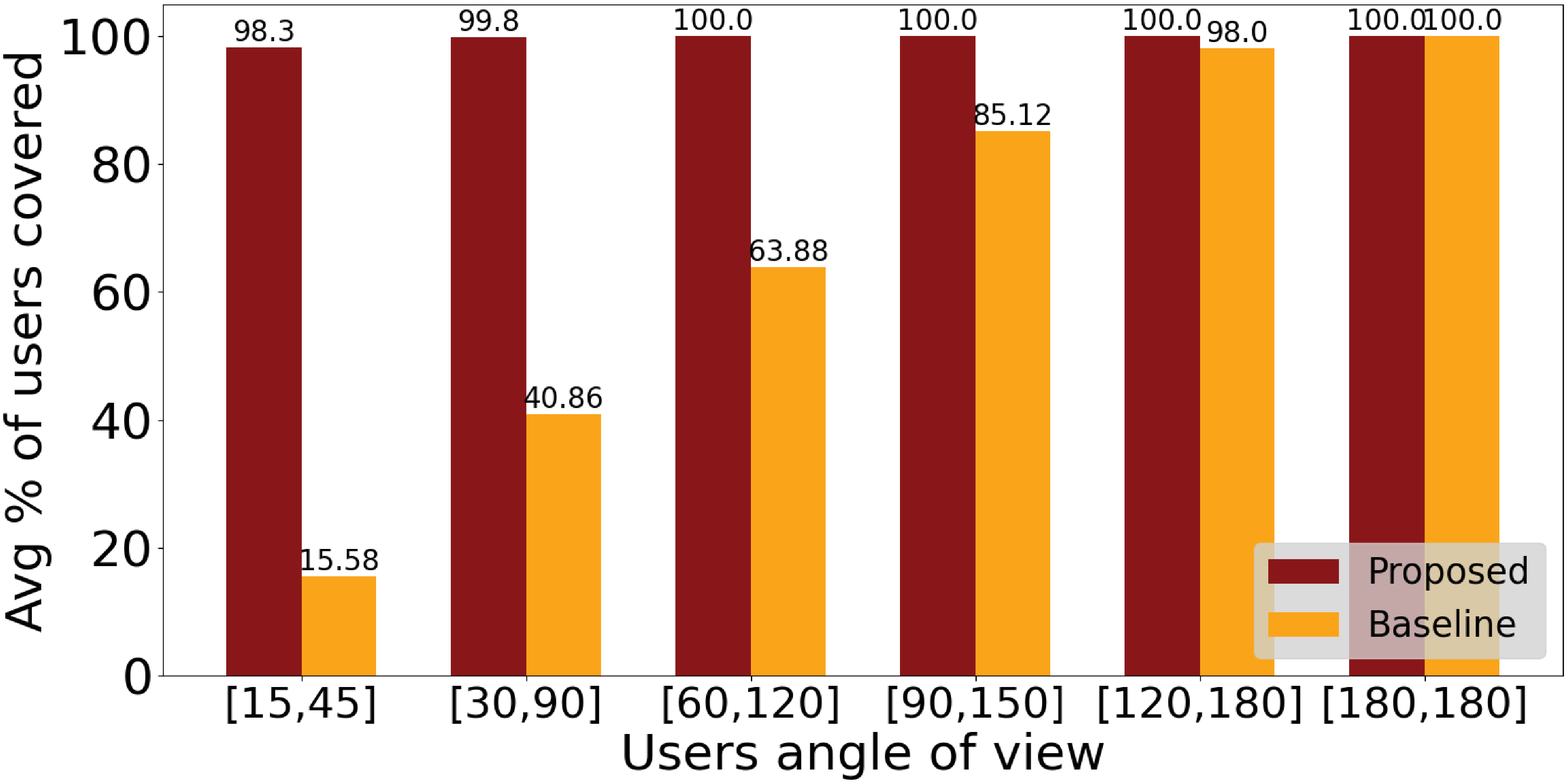}}
    \subfigure[\label{num_of_UAVs_vs_angles}]{\includegraphics[scale=0.15]{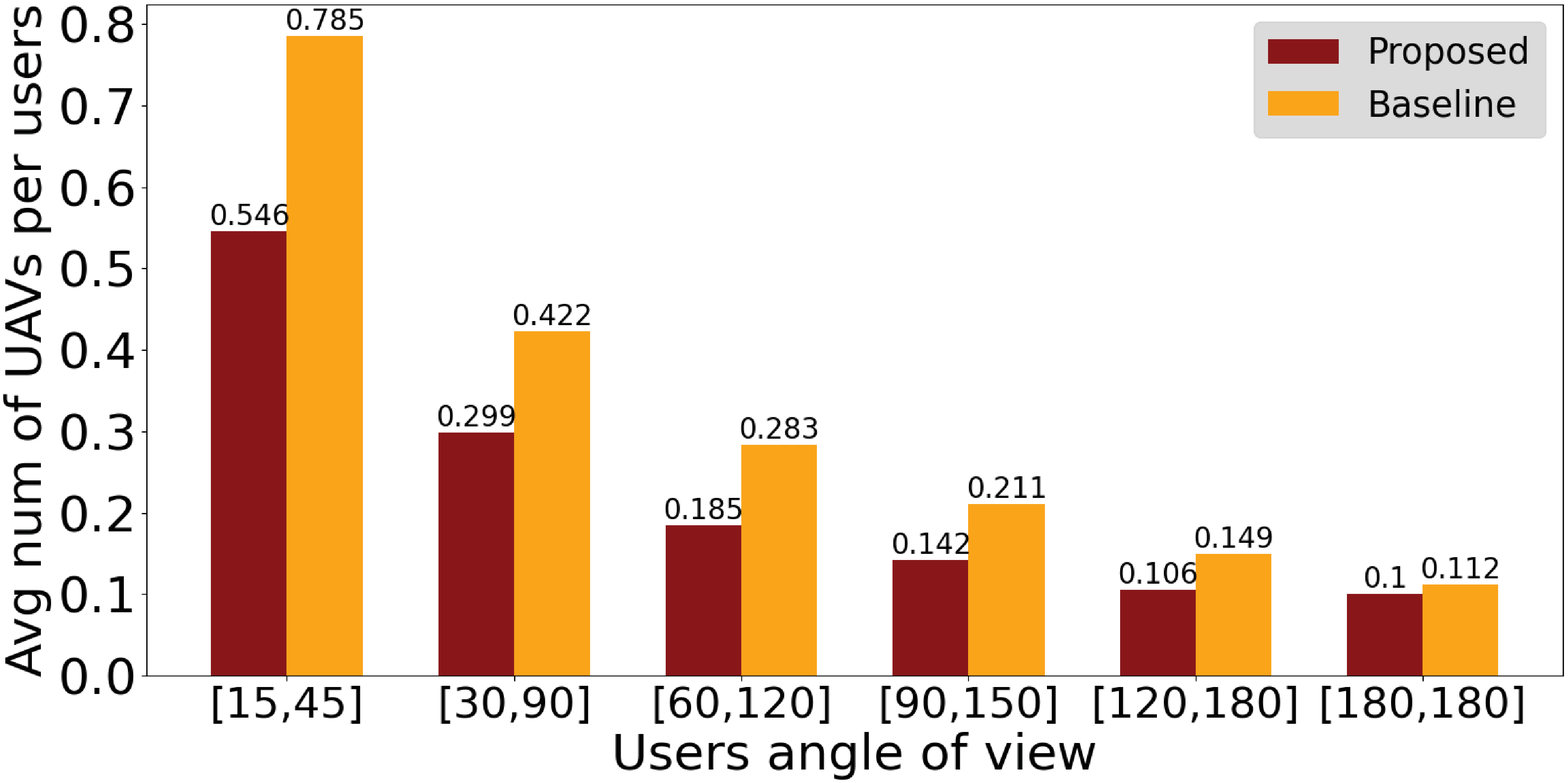}}
    \subfigure[\label{SNRs}]{\includegraphics[scale=0.15]{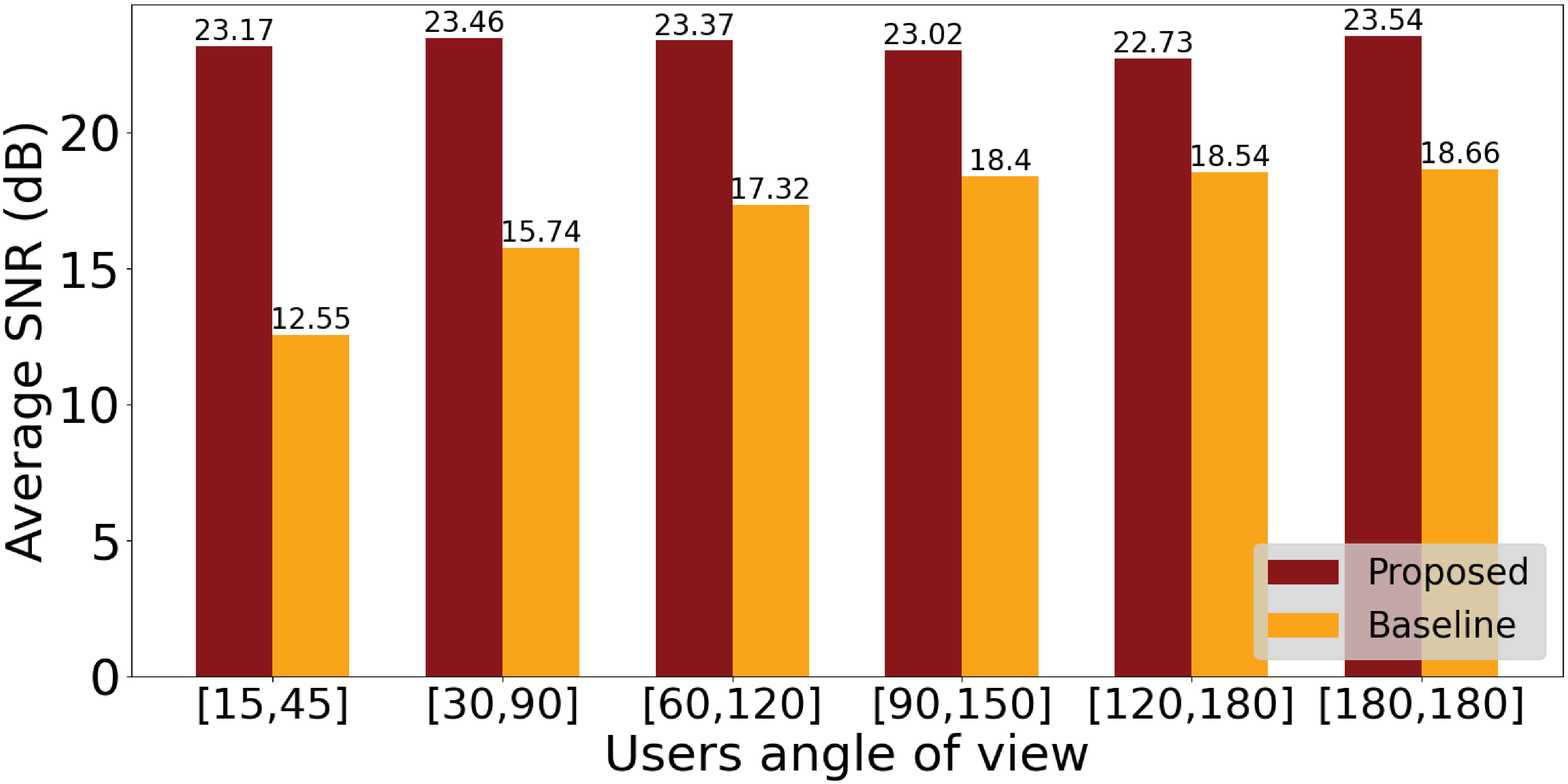}}
    \vspace{-0.1in}
    \caption{Varying user AoV: (a) Avg. perc. of users covered, (b) Avg. no. of UAVs per covered users, and (c) Avg. SNR achieved.}
    \vspace{-0.2in}
\end{figure*}

\noindent \textbf{Simulation Setting.}
Unless otherwise stated, $100$ users are uniformly at random in clusters, where each cluster contains 10-15 devices, in a 3D simulation region of (length $\times$ breadth $\times$ altitude $=  1000 \times 1000 \times 100$ m$^3$), whose base corresponds to the ground plane. The simulation region is divided in $\mathbb{K}$ 3D cubic grids, where each cubic grid is $20 \times 20 \times 20$ m$^3$. These cubic grids are the only feasible locations of a UAV in the considered 3D simulation region\footnote{We have defined extra grid locations outside of the considered 3D space region such that they are placed within a specific $50$ m outside the boundaries of the 3D region to cover the users near the boundaries which have a LoV region outside the considered 3D region.}. We consider that each user's angle-of-view (AoV) is uniformly at random in a unique interval from a set of intervals --  $\{[15^\circ,45^\circ]$, $[30^\circ,90^\circ]$, $[60^\circ,120^\circ]$, $[90^\circ,150^\circ]$, $[120^\circ,180^\circ]$, $[180^\circ$,$180^\circ]\}$ for each simulation setting. This represents the fact that the user's direct LoV may be blocked from due to buildings, trees or other objects. Major simulation parameters are listed in Table \ref{tab:simulation_parameters}.

\begin{table}
 \centering
 \caption{Simulation parameters}
 \label{tab:simulation_parameters}
    \begin{tabular}{|c|c|}
    \hline
    \textbf{Parameters}  & \textbf{Value} \\ \hline
    Simulation 3D region & $1000$ m $\times$ $1000$ m $\times$ 100 m \\ \hline
    Number of Users & $100$ \\ \hline
    Carrier frequency ($f_c$) & $60$ GHz \\ \hline
    UAV Transmit Power ($P_t$) & $0.5$ W \\ \hline
    Minimum UAV Antenna Gain ($G_t$) & $10$ dB \\ \hline
    Bandwidth & 1 GHz \\ \hline
    Noise Power Spectral Density & -174 dBm/Hz  \\ \hline
    SNR Threshold for user coverage ($\gamma_{0}$) & $15$ dB     \\ \hline
    UAV antenna HPBW ($\phi$) & $45^{\circ}$ \\ \hline
 \end{tabular}
 \vspace{-0.2in}
\end{table}

Note using the parameters listed in Table \ref{tab:simulation_parameters}, the SNR $\gamma_{ij}$ between a UAV $\hat{c}_i$ and user $\hat{u}_j$ can be computed using Eq. \ref{SNR_eq}. Now given $\gamma_{ij}$, known user $\hat{u}_j$'s AoV ($\theta_j$), UAV antenna HPBW $\phi$, the maximum 3D distance ($D_{max}$) between UAV $\hat{c}_i$ and user $\hat{u}_j$ can be calculated, where $D_{max}$ is $\approx 80$ m.

\noindent \textbf{Experimental Results.}
Results are reported for percentage of covered users, number of UAVs needed, and average SNRs against varying user AoV. We also conducted experiments with varying number of users, region sizes, SNR thresholds, and antenna HPBWs, and the results obtained show similar behavior for proposed  and baseline approaches as that of varying user AoVs. (Not shown due to space constraint.)

\textbf{Average percentage of users LoS covered.} As shown in Fig. \ref{Percentages}, the percentage of LoS covered users increases with increasing user AoV. This is intuitive as the probability of a user to be in LoS coverage by a certain UAV increases with increasing user AoV, given the other parameters (i.e., SNR threshold and UAV HPBW) are constant (See Equation \ref{eq:LoS_coverage_model}). It is important to note that the proposed approach significantly outperforms the baseline approach, particularly for limited user AoV (e.g., $[15^{\circ}, 45^{\circ}]$). This is because unlike the baseline approach, the proposed approach specifically accounts for users' LoV, and chooses (sub)optimal 3D placement and antenna orientation of UAVs to guarantee LoS coverage to (almost) all users.

\textbf{Average number of UAVs needed per number of users LoS covered.}
Fig \ref{num_of_UAVs_vs_angles} shows that, with increasing user AoV, the proposed algorithm requires fewer number of UAVs to provide guaranteed LoS coverage to all users. This is again because a certain UAV can cover larger number of users when user AoV is larger (See Eq. \ref{eq:LoS_coverage_model}), and thus, on average fewer UAVs are required to cover all users in the considered 3D region. Besides providing LoS coverage to less number of users in the baseline approach compared to the proposed approach (as shown in Fig. \ref{Percentages}), the average number of UAVs needed per number of users LoS covered in the proposed algorithm is significantly less than the baseline approach, thus showcasing the superiority of the proposed algorithm.

\textbf{Average SNRs achieved.}
As shown in Fig. \ref{SNRs}, by considering an SNR of zero for the users out of the coverage (or in other words, users in NLoS region), our approach experiences significantly higher average SNR at users when compared to that of the baseline approach. This is expected as the proposed algorithm determines the suitable 3D placement and antenna orientation for UAVs such that the average SNR experienced by users are maximized (See Obj. function in Equation \ref{obj_function}). 

\section{Conclusion}
\label{Secconclusion}
In this letter, we investigated the design of joint 3D placement and orientation  of high frequency band (e.g., mmWave or teraHertz) UAVs which require guaranteed line-of-sight (LoS) coverage to users -- with limited direct line-of-view (LoV) due to various blocking objects around it. We first formulated the aforestated joint problem as an integer linear programming (ILP) optimization problem and proved its NP-Hardness. 
By utilizing the geometric properties of the setup, a low-computational complexity algorithm was presented to solve the problem. Our comprehensive evaluations showed the superiority of the proposed algorithm in providing LoS coverage for a near $100\%$ of the users while experiencing significantly higher SNRs, when compared to that of baseline approach, for all considered simulation setting.

\vspace{-0.1in}
\bibliography{main.bbl}
\bibliographystyle{IEEEtran}

\end{document}